    \newtheorem{prop}{Proposition}[section]
    \numberwithin{equation}{section}
\begin{document}

\title{Breaking the permutation character of diffeomorphisms on spinor structures} 

\author{J. M. Hoff da Silva} 
\email{julio.hoff@unesp.br}
\affiliation{Departamento de F\'isica, Universidade Estadual Paulista, FEG-UNESP, Av. Dr. Ariberto Pereira da Cunha, 333, Guaratinguet\'a, SP, Brazil.}

\begin{abstract}
We investigate the impact of diffeomorphisms where more than one nonequivalent spinor structure is built upon a given base manifold endowed with nontrivial topology. We call attention to the fact that a relatively straightforward construction evinces a lack of symmetry between fermionic modes from different spinor bundle sections, leading to a dynamic preference breaking the permutation character of diffeomorphisms on spinor structures.   
\end{abstract}

\maketitle

\section{Introduction}

There is a precise sense under which spinors may be regarded as (kind of) scalars under diffeomorphisms. Directly stated, it needs a base manifold with trivial topology \cite{dab}. When nontrivial topology base manifolds are taken into account, then more than one nonequivalent spinor structure exist \cite{gre} and, since the nonequivalence forbids the existence of a global isomorphism between them, the (necessary and sufficient) condition for spinors behave as scalars under diffeomorphisms cannot be accomplished. Additionally, their action does not preserve spin structure, leading instead to a permutation of them. A permutation among spinor structures, as it is well-known, is a bijective map which, in turn, presupposes the absence of a preferred spinor structure. 

In the presence of nontrivial topology, the dynamics of some spinors are corrected by an additional term reflecting it \cite{sf,petry}. This extra term entering the Dirac operator shifts the dispersion relation so that the solution is not degenerated\footnote{Along this paper, degeneracy means different modes with same energy.} with the solution coming from the free (and without extra term) Dirac equation \cite{rev}. Thus, although a symmetry exists between solutions of different spinor bundle sections, the nondegeneracy between solutions indicates that diffeomorphisms act not so freely, engendering a permutation among spinor structures. In other words, the non-degenerate aspect of the spectra produces an asymmetry in the modes. The aforementioned asymmetry leads to the preferable spin structure because low-energy solutions are more likely to occur, and diffeomorphisms shall not jeopardize this aspect. Thus, the impact of diffeomorphisms on spinor structures differs from a usual permutation. This possibility raises questions about the theory of non-invariance under diffeomorphisms, but although further analysis is required, this property may be conceptually appropriate in some contexts involving diffeomorphism anomalies.   

In this paper, we evince the asymmetric aspect of solutions coming from different spinor structures through a relatively standard construction based on a simple base manifold exhibiting nontrivial topology and with two elements belonging to the first \v{C}ech cohomology group with coefficients in $\mathbb{Z}_2$. In Section II, we review some aspects presented in Ref. \cite{dab}, delineating the main ideas behind the lift of a given diffeomorphism to the spinor bundle structure, the reasoning entailing a usual permutation on spinor structures, and the consequent composition law. Going further, the possibility of a preferred spinor structure is constructed in section III, along with an analysis of possible composition laws. Section IV is reserved for final considerations.
    
\section{Diffeomorphisms impact on Spinor Structures}

Let $M$ be an $n-$dimensional orientable Riemannian manifold, and denote by $\text{Diff}(M)$ the group of diffeomorphisms of $M$ preserving orientability. Call $\text{Diff}_{\mathbb{1}}(M)\subset \text{Diff}(M)$ the subgroup of diffeomorphisms homotopic to the identity. Let $P_{GL^+(n)}$ denotes a principal bundle over $M$ with structure group given by $GL^+(n)$ and $P_{SO(n)}\subset P_{GL^+(n)}$ a principal subbundle. If ${_2GL}^+(n)$ is the $GL^+(n)$ double cover, then quite naturally $P_{{_2GL}^+(n)}\supset P_{Spin(n)}$. Aimed with the existence of $\rho:\,_2GL^+(n)\rightarrow GL^+(n)$ such that $\ker(\rho)=\mathbb{Z}_2\in Z( _2GL^+(n))$ and a bundle map $\eta:P_{_2GL^+(n)}\rightarrow P_{GL^+(n)}$, the restriction of $\eta$ to the subbundles provide $\eta|_{P_{Spin(n)}}:P_{Spin(n)}\rightarrow P_{SO(n)}$ and the pair $(P_{Spin(n)},\eta|_{P_{Spin(n)}})$ furnishes a spinor structure over $M$.

Now, consider $f\in\text{Diff}(M)$, whose pullback impact on the metric is given by $g'=f^* g$ as usual. The lift of $f$ to the bundle structure is an automorphism $\mathfrak{F}\in \text{Aut}(GL^+(n))$ whose restriction to $SO(n)$ maps orthonormal frames of the $g'$ metric on their $g$ metric counterparts yielding $\mathfrak{F}:P_{SO(n)|_{g'}}\rightarrow P_{SO(n)|_{g}}$. In this context 
\begin{equation}
\eta|_{P_{Spin(n)}}:P_{Spin(n)|_{g,g'}}\rightarrow P_{SO(n)|_{g,g'}},
\end{equation} that is, there exists a $\eta-$mapping for each metric. Out from the triple $(\mathfrak{F},\eta|{g'},\eta|{g})$ it can be constructed the isomorphism $P_{Spin(n)|_{g'}}\cong P_{Spin(n)|_{g}}$ by
\begin{equation}
\tilde{\mathfrak{F}}=\eta^{-1}|_g\circ \mathfrak{F}\circ \eta|_{g'}. \label{est}
\end{equation} The existence (unique up to isomorphisms) of $\tilde{\mathfrak{F}}\in \text{Aut}(Spin(n))$ is highly relevant to the behavior of spinors under diffeomorphisms. The following results express this relevance.

\begin{prop}      
A spinor field $\psi(x)$ is a scalar under a diffeomorphism $f$ --- in the sense that $\psi'(x)=\psi(f(x))$ --- if, and only if, $P_{Spin(n)|_{g'}}\cong P_{Spin(n)|_{g}}$.   
\end{prop}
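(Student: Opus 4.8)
The plan is to prove the biconditional by leaning on the explicit construction of $\tilde{\mathfrak{F}}$ in Eq.~\eqref{est} together with the description of a spinor field as a section of the associated bundle $P_{Spin(n)}\times_\sigma V$, where $\sigma$ is the spin representation. The guiding observation is that the scalar rule $\psi'(x)=\psi(f(x))$ asserts precisely that the pullback by $f$ is accompanied by no residual rotation in the spin fiber, and that the admissibility of such a ``clean'' pullback as a global object is controlled exactly by the existence of the isomorphism $P_{Spin(n)|_{g'}}\cong P_{Spin(n)|_{g}}$. So the two statements are really two faces of the same condition, namely the global existence of $\tilde{\mathfrak{F}}$.

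For the ($\Leftarrow$) direction I would assume the isomorphism holds, so that $\tilde{\mathfrak{F}}=\eta^{-1}|_g\circ\mathfrak{F}\circ\eta|_{g'}$ is a well-defined element of $\text{Aut}(Spin(n))$ (its existence, unique up to isomorphism, is exactly what the hypothesis guarantees). I would then trace a spinor through the composition: $\eta|_{g'}$ sends a $g'$-spin frame to its $SO(n)$ image, $\mathfrak{F}$ implements the $f$-induced change of orthonormal frame that accompanies $g\mapsto g'$, and $\eta^{-1}|_g$ lifts the result back to a $g$-spin frame. Since the two $\eta$-maps are the double covers for the respective metrics and $\mathfrak{F}|_{SO(n)}$ is the frame rotation already absorbed into the metric change, the net effect on the fiber is the canonical identification furnished by $\tilde{\mathfrak{F}}$, leaving only the base-point relocation $x\mapsto f(x)$. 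Under this identification the transported field evaluated at $x$ is $\psi(f(x))$, which is the scalar rule.

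For the ($\Rightarrow$) direction, which I expect to be the main obstacle, I would argue the contrapositive: if the two spin structures are nonequivalent, then the scalar rule cannot hold globally. The key input is that inequivalent spin structures over $M$ differ by a nontrivial class in $H^{1}(M,\mathbb{Z}_2)$, so their transition functions on the overlaps of a good cover disagree by a nontrivial $\mathbb{Z}_2$-cocycle. Writing $\psi(f(x))$ in a local trivialization adapted to the $g'$-structure, I would show that the transition functions required to glue the local pieces into a bona fide section are the $g$-structure transition functions pulled back by $f$; demanding that $\psi'=\psi\circ f$ be a genuine global section then forces these to coincide with the $g'$-transition functions, i.e.\ forces the two $\mathbb{Z}_2$-classes to agree, which is precisely $P_{Spin(n)|_{g'}}\cong P_{Spin(n)|_{g}}$. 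The delicate point is making this gluing watertight: one must verify that no local compensating spin rotation $\sigma(\tilde{\mathfrak{F}})$ can absorb a nontrivial $\mathbb{Z}_2$ holonomy mismatch, since exactly such a global compensator is what fails to exist when $\tilde{\mathfrak{F}}$ cannot be defined. Both directions then merge into the single statement that the global existence of $\tilde{\mathfrak{F}}$ — equivalently the isomorphism — is necessary and sufficient for the spin lift of $f$ to reduce to a pure pullback, closing the equivalence.
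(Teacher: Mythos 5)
Your proposal is correct, and its first half (isomorphism implies the scalar rule) is essentially the paper's own argument: the paper makes your ``tracing through the composition'' explicit by defining $\xi'=\xi\circ\tilde{\mathfrak{F}}$ and $r'=\tilde{\mathfrak{F}}^{-1}\circ r\circ f$, whence $\psi'=\xi'\circ r'=\xi\circ r\circ f=\psi(f)$. The converse is where you genuinely diverge. The paper stays entirely at the level of frame and fiber maps: it writes the most general transformed data as $\xi'=\xi\circ\kappa\circ\Omega$ and $r'=\beta\circ\sigma\circ r\circ f$, with $\Omega:P_{Spin(n)}|_{g'}\rightarrow P_{Spin(n)}|_{g}$ and $\beta:P_{Spin(n)}|_{g}\rightarrow P_{Spin(n)}|_{g'}$, and shows that the scalar condition forces $\phi\circ\Omega=\beta^{-1}$ with $\phi=\sigma\circ\kappa$, so that $\phi\circ\Omega$ is itself the required isomorphism $P_{Spin(n)}|_{g'}\cong P_{Spin(n)}|_{g}$, any discrepancy between $\Omega$ and $\tilde{\mathfrak{F}}$ being absorbed into the vertical adjustment $\phi$. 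You instead argue the contrapositive cohomologically: inequivalent structures differ by a nontrivial class in $\check{H}^1(M,\mathbb{Z}_2)$, and gluing $\psi\circ f$ into a global section of the $g'$-bundle would force the pulled-back $g$-transition cocycle to agree with the $g'$-cocycle. Your route connects directly to the classification of spin structures that the paper only invokes \emph{after} the proposition, and it makes the obstruction concrete; moreover, the ``delicate point'' you flag is in fact definitional, since the local compensating spin rotations you worry about constitute precisely a \v{C}ech coboundary, and a nontrivial class is by definition not removable by coboundaries --- so that gap closes immediately. By contrast, the paper's argument is shorter but more formal: it presupposes that maps $\Omega$ and $\beta$ between the two bundles can be written down at all and reads the isomorphism off the resulting algebra, whereas your version exhibits \emph{why} no such global bundle map can exist in the inequivalent case.
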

\begin{proof} 
	Consider $\{r\}$ a field of orthogonal frames whose typical element $r$ entails $r:U\subset M\rightarrow P_{Spin(n)}|_g$ and for $U'=f^{-1}(U)$ admit $r':U'\rightarrow P_{Spin(n)}|_{g'}$. Additionally, take $\xi:P_{Spin(n)}|_g\rightarrow \mathbb{C}^4$ and $\xi':P_{Spin(n)}|_{g'}\rightarrow \mathbb{C}^4$. Therefore, a section of the spinor bundle may be written as $\psi=\xi\circ r$.   
 	
	($\Rightarrow$) Define $\xi'=\xi\circ \tilde{\mathfrak{F}}$ and $r'=\tilde{\mathfrak{F}}^{-1}\circ r\circ f$. Then $\psi'=\xi'\circ r'=\xi\circ r\circ f=\psi(f)$.
	
	($\Leftarrow$) If $\psi'(x)=\psi(f(x))$, then $\xi'\circ r'-\xi\circ r\circ f=0$. For $\kappa:P_{Spin(n)}|_g\rightarrow P_{Spin(n)}|_g$ and $\Omega:P_{Spin(n)}|_{g'}\rightarrow P_{Spin(n)}|_g$ define a general $\xi'$ transformation by $\xi'=\xi\circ\kappa\circ\Omega$. Similarly, define $r'=\beta\circ\sigma\circ r\circ f$ with  $\sigma:P_{Spin(n)}|_g\rightarrow P_{Spin(n)}|_g$ and $\beta:P_{Spin(n)}|_g\rightarrow P_{Spin(n)}|_{g'}$. Therefore 
	\begin{equation}
	\xi\circ(\kappa\circ\Omega\circ \beta\circ \sigma-1)\circ r\circ f=0,
	\end{equation} leading to $\phi\circ \Omega=\beta^{-1}$ for $\phi=\sigma\circ \kappa$. Hence $\phi\circ\Omega:P_{Spin(n)}|_{g'}\rightarrow P_{Spin(n)}|_g$ and the eventual difference between $\Omega$ and $\tilde{\mathfrak{F}}$ is compensated by the adjustment $\phi:P_{Spin(n)}|_{g}\rightarrow P_{Spin(n)}|_g$. 
\end{proof}	

The above proposition is to be contrasted with a complement and a broader contextualization: the existence of lifting $\tilde{\mathfrak{F}}$ is completely conditioned to the triviality of the first \v{C}ech cohomology group $\check{H}^1(M,\mathbb{Z}_2)$. On the other hand, being the base manifold topologically nontrivial, i.e., $\pi_1(M)\neq 0$, $\check{H}^1(M,\mathbb{Z}_2)$ is also nontrivial. Therefore, in nontrivial topology, spinors are not scalars under diffeomorphisms. By its turn, in the case of nontrivial $\check{H}^1(M,\mathbb{Z}_2)$, there is more than one nonequivalent spinor structure (actually as many spinor structures as elements in $\check{H}^1(M,\mathbb{Z}_2)$). Stated backward, when more than one spinor structure is in order, spinors do not behave as scalars under diffeomorphisms. The action of $f\in \text{Diff}(M)\backslash \text{Diff}_{\mathbb{1}}(M)$ at the level of the spinor structures set is to permute between them, or, in equivalent words, to permute among elements of $\check{H}^1(M,\mathbb{Z}_2)$. This permutation can indeed be constructed with the nontrivial elements of the first \v{C}ech cohomology group \cite{dab}. For instance, assume the existence of only two nonequivalent spinor structures. The action of diffeomorphisms upon them may be described as follows. Take $P\times\tilde{P}=\{(a,b)|a\in P_{Spin(n)}\times_{\nu} \mathbb{C}^4, b\in\tilde{P}_{Spin(n)}\times_{\nu} \mathbb{C}^4\}$, where $\nu$ stands for a spin $1/2$ Spin group representation. Let $f_1$ and $f_2$ belong to $\text{Diff}(M)\backslash \text{Diff}_{\mathbb{1}}(M)$. The sequential permutation lifting possibilities in $P\times\tilde{P}$ can be straightforwardly depicted denoting by $(a,b)$ the standard (S) situation and by $(b,a)$ the change (C) of spinor structures. Thus, we have 
\begin{eqnarray}
(a,b)(a,b)=(a,b),\nonumber\\ 
(a,b)(b,a)=(b,a),\nonumber\\
(b,a)(a,b)=(b,a),\nonumber\\
(b,a)(b,a)=(a,b),
\end{eqnarray}that is to say, naively, $(S)(S)=(S)$, $(S)(C)=(C)$, $(C)(S)=(C)$ and $(C)(C)=(S)$, rendering the group isomorphic to $\mathbb{Z}_2$ with $(S)=(a,b)$ playing the role of identity. The permutation aspect was further investigated in many contexts. For instance, the existence of automorphisms leaving one (or every) spin structure invariant was analyzed in Ref. \cite{kallel}, while the study of spin structures permutation via the Artin braid group was pursued in Ref. \cite{wang}. The bijection character of diffeomorphisms on spinor structures indeed allows for these generalizations. In the next section, we argue that this bijection can be broken in some cases, and a dynamic selection of preferred spin structures can be ordered.  

\section{A preferred spinor structure possibility}

Consider $\pi_1(M)\neq \{0\}$ so that there is more than one nonequivalent spinor structure. For the sake of the argument, consider the existence of only two spinor structures, say $P$ and $\tilde{P}$ for short. We start this section with the key observation that while the spinor structures are nonequivalent, there is a linear and invertible map at the level of the section of spinor bundles such that 
\begin{eqnarray}
\rho:&&\left.\sec \tilde{P}_{Spin(n)}\times_{\nu}\mathbb{C}^4\rightarrow \sec P_{Spin(n)}\times_{\nu}\mathbb{C}^4\right.\nonumber \\&&\hspace{2.5cm}\left. \tilde{\psi}\mapsto \psi=\rho\tilde{\psi}.\right.
\end{eqnarray} All the details for obtaining such a map can be found elsewhere \cite{petry,rev}. Here, we shall depict the main steps related to the desired construction. The map above relates exotic (tilde) spinors with usual ones. However, a simple derivative of a spinor is also a spinor and, as such, one must find out a derivative operator, say $\nabla$, such that $\partial\psi=\rho\nabla\tilde{\psi}$. Let us make this point more precise: define a topological compensating field $B:\sec \tilde{P}_{Spin(n)}\rightarrow \sec \tilde{P}_{Spin(n)}$, so that $\nabla=\partial+B$. It can be readily obtained $B$ (details can be found in Ref. \cite{rev}) defining a closed (but not exact) form with $\oint_{S^1}B=2\pi$, and providing the right transformation of $\nabla\tilde{\psi}$. 

The situation presents an interesting symmetry elicited by the following construction. Starting from $\rho(\partial\tilde{\psi}+B\tilde{\psi})=\partial\psi$, the Dirac operator can be related in both sections by $\rho(D^+_E\tilde{\psi})=D_0\psi$, where $D_0$ denotes the standard Dirac operator and $D^+_E$ stands for the exotic one. The `$+$' label explicitly states that the $B$ form is added to $D_0$. Consider the set $\Xi:=\{\psi\}\in \sec P_{Spin(n)}\times_{\nu}\mathbb{C}^4$ and the splitting of $\Xi$ into $\{\ker{D_0},\Xi \,\backslash \ker{D_0}\}$, i.e., standard Dirac and amorphous\footnote{Though the term ``amorphous'' is used in different contexts \cite{crum,wal} we shall keep this designation here for spinors not obeying the standard Dirac equation.} spinors, respectively. Consider also the exotic counterpart, $\tilde{\Xi}:=\{\tilde{\psi}\}\in \sec \tilde{P}_{Spin(n)}\times_{\nu}\mathbb{C}^4$. Notice that for spinors $\psi \in \ker{D_0}\subset \Xi$ we have $\rho(D^+_E\tilde{\psi})=0$ and since $\rho$ is invertible it implies the existence of $\tilde{\psi}\in \tilde{\Xi} \,\backslash \ker{D_0}$ such that $D^+_E\tilde{\psi}=0$. Now, take a natural injection $\iota: \sec \tilde{P}_{Spin(n)}\times_{\nu}\mathbb{C}^4 \hookrightarrow \sec P_{Spin(n)}\times_{\nu}\mathbb{C}^4$ such that 
\begin{eqnarray}
\rho(B\tilde{\psi})=\iota(B)\rho\tilde{\psi}=B\psi \in \sec P_{Spin(n)}\times_{\nu}\mathbb{C}^4. 
\end{eqnarray} Therefore, from $\rho(D_E\tilde{\psi})=D_0\psi$ we have 
\begin{eqnarray}
D_0\tilde{\psi}=\rho^{-1}(D_E^-\psi)
\end{eqnarray} and for the exotic spinors set $\{\tilde{\psi}\}\in \ker{D_0}$ there are corresponding usual spinors $\{\psi\in \Xi\,\backslash \ker{D_0}|D_E^-\psi=0\}$. 

The solutions for $D_0$ obviously degenerate and, since the bilinear covariants for both sections are identical\footnote{The $\rho$ map is performed by an operator endowed with a complex unimodular factor.}, there is no measurable distinction between them. The situation changes for solutions of $D_E^{\pm}$. These modes are not degenerated when compared to solutions of $D_0$ and, more importantly, not degenerated when compared between themselves. For convenience\footnote{The asymmetry we are about to investigate is also expected to happen in curved spaces. We should keep our discussion as simple as possible, as it suffices to illustrate the relevant results. The analysis can be further adapted to more general base (suitable) manifolds without much difficulty.}, assume the simplest case in which $M$ is the Minkowski space and consider a subspace $\Sigma\subset M$ upon which acts a diffeomorphism $f:\Sigma\rightarrow (\mathbb{R}^2\times S^1)\times \mathbb{R}$, with $f$ defined as before. Clearly $\pi_1(\Sigma)\neq \{0\}$ and $\check{H}^1(\Sigma,\mathbb{Z}_2)=\check{H}^1(S^1,\mathbb{Z}_2)\cong \mathbb{Z}_2$ due to $f$. Therefore, there are two (and only two) spinor structures.   

Calling $B=\gamma k$ (where $\gamma$ stands for gamma matrices), the four-vector $k$ encompasses the correction from nontrivial topology to the derivative operator. Pursuing the simplest case in which $k$ slowly varies with a very low magnitude vector in the semi-classical framework suffices. This situation allows for disregarding $k^2$ and $\partial k$ terms. The dispersion relation associated with spinors belonging to $\ker D_E^\pm$ can be readily computed within this approximation. Denoting $E^\pm$ the relativistic energy of a spin $1/2$ fermion of rest mass $m$ and momentum $\bf{p}$, the dispersion relation leads to  
\begin{equation}
E^\pm\approx (|{\bf p}|^2+m^2)\bigg[1\mp\frac{{\bf k}\cdot {\bf p}}{2(|{\bf p}|^2+m^2)}\bigg], 
\end{equation} where $k_0$ was settled to zero ensuring $m$ as the rest energy. The net topological effect is certainly very low in this approximation, but it is relevant that there is an energy dependence on the sign even in this case. This dependence breaks the degeneracy of solutions. To the argument, one could fix the momentum such that ${\bf k}\cdot {\bf p}>0$, and thus $\tilde{\psi}\in \ker D_E^+$ would be dynamically preferable. Of course, if things are such that ${\bf k}\cdot {\bf p}<0$, then the dynamic preference would fall on $\psi_E^-$. For the argument continuity, it suffices that one solution is preferable. Hence, for convenience, let us keep the analysis for ${\bf k}\cdot {\bf p}>0$. The central point is to understand how diffeomorphisms impact this degeneracy lifting. 

Let us denote, from now on, by $\tilde{\psi}_E^+$ an element of $\ker D_E^+$, by $\tilde{\psi}_0$ an element of $\ker D_0$ in $\sec \tilde{P}$ (for short), and analogously by $\psi_0 \in \sec P$ an element of $\ker D_0$ and by $\psi_E^-$ an element of $\ker D_E^-$. As shown, $\tilde{\psi}_E^+ \in \sec \tilde{P}$ and $\psi_0 \in \sec P$ are related by $\rho$, while the remain pair is related by $\rho^{-1}$. Moreover, since the ${\bf k}$ encodes a nontrivial topology correction to the dynamical operator, it is unlike that the solutions $\tilde{\psi}_E^+$ and $\psi_E^-$ can change from one spinor structure to another under diffeomorphism, since these modes are somewhat topologically protected (the nonholonomic condition $\oint_{S^1}B=2\pi$ cannot be undone without changing the base manifold topology). Besides, since the $\rho$ mapping connect $\tilde{\psi}_E^+$ and $\psi_0$, the robustness of $\tilde{\psi}_E^+$ under diffeomorphisms implies the same for $\psi_0$. One could think that the situation is the same for the remaining pair. However, the lack of degeneracy is such that a diffeomorphism changing the spinor structure of $\tilde{\psi}_0$ would imply a connection with the lowest energy mode again via $\rho$. The diagram below summarizes the situation, where the arrows denote spinor structure preferences under diffeomorphisms. 

\[
\xymatrixcolsep{3pc}
\xymatrixrowsep{2.5pc}
\xymatrix{
	\tilde{\psi}_E^+ \ar@{-}[d]_-{\rho} 
	\ar@(ul,ur)[] 
	& \tilde{\psi}_0 \ar@{-}[d]^-{\rho^{-1}} \ar[dl] \\
	\psi_0 \ar@(dr,dl)[]  & \psi_E^- 
	\ar@(dr,dl)[] 
} 
\] \\

It is certainly the case of some additional prudence in this line of argumentation. One should be careful inferring the impact of diffeomorphisms on spinor structures from the behavior of sections (spinors themselves). However, we stress that in addition to the topological protection of modes $\tilde{\psi}_E^+$ there is a break of degeneracy, and a diffeomorphism must not destroy this asymmetry. Observe that the case ${\bf k}\cdot{\bf p}<0$ retails asymmetry but modifies the preferred mode and, consequently, the preferred spin structure. Hence, the diffeomorphism acting upon a given section lifts to action over spinor structures which, given the section dynamical preference, shall reflect such an asymmetry, preventing the permutation character of diffeomorphisms at the $P\times\tilde{P}$ space level.    

We propose to spam the $P\times\tilde{P}$ space into $\{(a,b), (ab,.), (.,ab),(b,a)\}$, where beyond the usual $(S)$ and $(C)$ possibilities, terms with the dot denote the preference to $P$ $(ab,.)$ or $\tilde{P}$ $(.,ab)$ possibilities under diffeomorphisms, thus breaking the bijection character necessary to permute among spinor structures. Within the context of our previous analysis, $P$ must present some degree of preference over $\tilde{P}$. A composition law taking it into account may be settled, for instance, allowing for the term $(ab,.)$ (and avoiding $(.,ab)$) whenever possible. This reasoning leads to the Cayley table below.      

\[
\begin{array}{c|cccc}
\text{$P\leftarrow \tilde{P}$} & (a,b) & (ab,.) & (.,ab) & (b,a) \\
\hline
(a,b) & (a,b) & (ab,.) & (a,b) & (b,a) \\
(ab,.) & (ab,.) & (ab,.) & (ab,.) & (ab,.) \\
(.,ab) & (a,b) & (ab,.) & (.,ab) & (b,a) \\
(b,a) & (ab,.) & (ab,.) & (b,a) & (a,b) \\
\end{array}
\]\\

Observe that the element $(.,ab)$ serves as the identity, the composition is abelian, and the preference makes $(ab,.)$ act as an attractor. Finally, changing the dynamical preference changes the Cayley table to the one below. Notice the abelian character permanence while changing the roles of identity and attractor terms, as expected.  

\[
\begin{array}{c|cccc}
\text{$P\rightarrow \tilde{P}$} & (a,b) & (ab,.) & (.,ab) & (b,a) \\
\hline
(a,b) & (a,b) & (a,b) & (.,ab) & (.,ab) \\
(ab,.) & (a,b) & (ab,.) & (.,ab) & (b,a) \\
(.,ab) & (.,ab) & (.,ab) & (.,ab) & (.,ab) \\
(b,a) & (.,ab) & (b,a) & (.,ab) & (a,b) \\
\end{array}
\]

Observe that, usually, the change ${\bf k}\mapsto -{\bf k}$ would require a diffeomorphism breaking (at least locally) orientability, a situation excluded under our hypothesis. Nevertheless, the vector ${\bf k}$ is taken as the gradient of a function related to the nontrivial topology \cite{rev}, i.e. ${\bf k}=\nabla\theta(x)$. This function may change under diffeomorphisms; hence, it could be the case of changing the sign of ${\bf k}$ (and consequently the preferred spin structure) while preserving orientation. This would require an additional involution $\theta(x)\mapsto -\theta(x)\, (\!\!\!\!\mod 2\pi)$ taking advantage of the $S^1$. This possibility adds difficulty in the diffeomorphism composition at $P\times \tilde{P}$ since it would require changing the above Cayley tables during successive compositions. Finally, it is clear that in the quite particular case of being the fermion momentum perpendicular to ${\bf k}$, the situation is completely degenerate, and the usual permutation applies.  

In an attempt to frame the non-invariance under diffeomorphisms into a consistent physical scenario, we want to finalize the discussion by pointing out that the indicated non-invariance under diffeomorphisms may not be a disadvantage. The properties of spinor structures transformation are known to be relevant in studying the (non)invariance of the effective action under diffeomorphisms \cite{isav}. Although a complete analysis must be performed, it is essential to evaluate the broken permutation proposed here as a type of `gauge fixing' procedure, with the potential to circumvent or at least mitigate global diffeomorphism anomalies. 

\section{Final Remarks}

Formally, the asymmetry investigated can be generalized to as many spinor structures as elements of $\check{H}^1(M,\mathbb{Z}_2)$; nevertheless, it would require extra attention to the degeneracy spectra and its lacking. At the moment, it is unclear whether the resulting analysis would necessarily be non-degenerated. In the case of partial degeneracy, it is intriguing to contrast the possibility raised here with the exhaustive approach presented in Ref. \cite{kallel}. 

The founded results suggest that the very concept of diffeomorphisms here shall not be understood in the usual sense but instead as a map between different theory phases. At the present stage of development, one should be careful in considering the diffeomorphisms as connecting nonequivalent vacuum since it would be too far in interpreting higher-energy field theoretical phenomena via a single-particle wave function. The use of non-unitary Bogoliubov transformations is a promising approach, but at this point, it is something we can only speculate on.

\section*{Acknowledgments}

The author thanks CNPq (grant No. 307641/2022-8) for financial support. It is a pleasure to thank Prof. Roldao da Rocha and Gabriel M. Caires da Rocha for useful conversations.

\end{document}